\newcommand{\signed}%
    {{\unskip\nobreak\hfill\penalty50
      \hskip2em\hbox{}\nobreak\hfil $\blacksquare$
      \parfillskip=0pt \finalhyphendemerits=0 \par}}
\newcommand{\shorten}[1]{}
\begin{document}
\pagestyle{empty}
\title{Balanced XOR-ed Coding }
\author{
Katina Kralevska, Danilo Gligoroski and
Harald {\O}verby}
%
%
%
%
%
%
%
%
\institute{
Department of Telematics, Faculty of Information Technology, Mathematics and Electrical Engineering, Norwegian University of Science and Technology, Trondheim, Norway,
\\{katinak@item.ntnu.no, danilog@item.ntnu.no, haraldov@item.ntnu.no} 
}

\maketitle \pagestyle{empty}

\vspace{-0.5cm}
\begin{abstract}  This paper concerns with the construction of codes over $GF(2)$ which reach the max-flow for single source multicast acyclic networks with delay. The coding is always a bitwise XOR of packets with equal lengths, and is based on highly symmetrical and balanced designs. For certain setups and parameters, our approach offers additional plausible security properties: an adversary needs to eavesdrop at least max-flow links in order to decode at least one original packet.
\end{abstract}
\vspace{-0.7cm}
\

\ {\bfseries {\em Keywords}} -- XOR coding, $GF(2)$, Latin squares, Latin rectangles
\section{Introduction}

Encoding and decoding over $GF(2)$ is more energy efficient than encoding and decoding in any other larger field. Recent studies concerning several new techniques in network coding \cite{Ahlswede2000} (Linear Network Coding (LNC) \cite{Li2003,journals/ton/KoetterM03} and Random Linear Network Coding (RLNC) \cite{Ho2006}) confirmed that encoding and decoding over $GF(2)$ are up to two orders of magnitude less energy demanding and up to one order of magnitude faster than the encoding/decoding operations in larger fields \cite{Hassan2009,pedersen2008,pedersen}.
\shorten{
In the last decade, network coding has attracted a lot of interest among researchers due to its significant advantages such as improved throughput, robustness, scalability, and security. These performance gains are achieved by combining algebraically the packets at the nodes in the network, either at sources or at intermediate nodes. For network coding to be effective, the condition of receiving packets from multiple paths is obligatory.

Network coding was first introduced in the seminal paper by Ahlswede et al. \cite{Ahlswede2000}, followed by the remarkable works of Koetter and M\'{e}dard in \cite{journals/ton/KoetterM03}, Li et al. \cite{Li2003}, and Ho et al., \cite{Ho2006}. Both linear network coding (LNC) \cite{Li2003} and random linear network coding (RLNC) \cite{Ho2006} achieve the capacity when the field size is sufficiently large. However, performing operations in larger finite fields is costly and complex. In \cite{Hassan2009}, \cite{pedersen2008}, and \cite{pedersen}, it has been reported that RLNC can be several orders of magnitude more energy demanding and up to one order of magnitude slower than the encoding done by simple XOR operations.}

The high computational complexity of packet encoding and decoding over large finite fields and its high energy cost which makes it unsuitable for practical implementation are the main motivation to seek for coding techniques only with XOR operations. The first theoretical work was done by Riis in \cite{Riis} who showed that every solvable multicast network has a linear solution over $GF(2)$. Afterwards, XOR coding in wireless networks was presented in \cite{Katti2008}, where the main rule is that a node can XOR $n$ packets together only if the next hop has all $n-1$ packets. A more general network coding problem which is called index coding is considered in \cite{Sprintsonmatroid,indexcoding}. In \cite{indexcoding} the authors address the coding problem by proposing coding over $GF(2)$. The encoding scheme is based on bitwise XORing by adding redundant bits, and the decoding scheme is based on a simple but bit after bit sequential back substitution method.

The main contribution of our work is a construction of codes over $GF(2)$ by using combinatorial designs (Latin squares and Latin rectangles) \cite{Stinson1999}. Its lower
computation and energy cost makes it suitable for practical
implementation on devices with limited processing and energy
capacity like mobile phones
and wireless sensors. We will illustrate the construction of codes by the following simple example.
\begin{figure}[!ht]
\vspace{-0.5cm}
\centering
\includegraphics[width=2.5in]{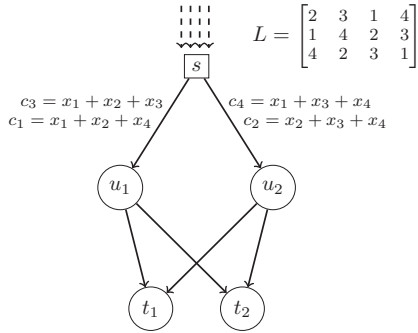}
\caption{An example of balanced XOR coding where the source sends combinations of source packets (combined as the column of the Latin rectangle). The intermediate nodes just forward the data to the sink nodes.}
\label{voved}
\end{figure}

\vspace{-0.6cm}

\begin{example}
We use the following strategy (Fig. \ref{voved}): the source $s$ performs bitwise XOR of packets with equal length based on the incidence matrix of a Latin rectangle $L$. Each column of $L$ represents a combination of source packets $x_i$, $i=1,\ldots,4$, in a coded packet $c_i$, $i=1,\ldots,4$. In the first phase, the packets $c_1$ and $c_2$ are sent, and in the second phase, the packets $c_3$ and $c_4$  are sent. The intermediate nodes $u_1$ and $u_2$ forward the coded packets to the sink nodes $t_1$ and $t_2$ which decode the packets by using the inverse matrix of the incidence matrix of $L$. The sink nodes need only to know the combination of source packets in each received packet. Note that the max-flow in the network is achieved.
\end{example}
\vspace{-0.3cm}
Routinely as in other coding approaches, this information is included in the header of each coded packet.  Since in this paper we use diversity coding performed just by the source nodes, there is no need for updating the coefficients in the header at each intermediate node. The length of the prepended header vector is negligible compared to the length of the packet.

The construction of our codes was not motivated by security issues, therefore the security is not the main goal in this paper. However, it turns out that for certain setups and parameters, our approach offers additional plausible security properties. The plausible security properties that accompany our approach are not based on hard mathematical problems in modern cryptology (for example factoring of large integers or discrete logarithm problems or on the Shamir's secret sharing algorithm). We show that if an eavesdropper wants to reconstruct at least one original packet, then the number of eavesdropped links should be equal to the max-flow of the network. Bhattad and al. \cite{Bhattad} make similar observations when network coding is implemented so that a weekly secure network coding is achieved.

The rest of the paper is organized as follows: Section \ref{SecPrelim} presents the notation and the mathematical background that are used in the following sections. The construction of codes is presented in Section \ref{construction}. Section \ref{security} illustrates the security features of our approach. Sections \ref{conclusion} concludes the paper.

\section{Notation and Mathematical Background} \label{SecPrelim}
We define a communication network as a tuple $N=(V, E, S, T)$ that consists of:
\begin{itemize}
	\item a finite directed acyclic multigraph $G=(V, E)$ where $V$ is the set of vertices and $E$ is the set of edges,
	\item a set $S \subset V$ of sources,
	\item a set $T \subset V$ of sink nodes.
\end{itemize}
Assume that vertex $s\in S$ sends $n$ source packets to vertex $t\in T$ over disjoint paths. A minimal cut separating $s$ and $t$ is a cut of the smallest cardinality denoted as $\text{mincut}(s, t)$. The packets are sent in several time slots, i.e., phases denoted as $p$. The maximum number of packets that can be sent in a phase from $s$ to $t$ is denoted as $\text{maxflow}(t)$. The Max-Flow Min-Cut Theorem \cite{MaxFlowMinCut} indicates that $\text{mincut}(s, t) = \text{maxflow}(t)$. The multicast capacity, i.e., the maximum rate at which $s$ can transfer information to the sink nodes, cannot exceed the capacity of any cut separating $s$ from the sink nodes. A network is solvable when the sink nodes are able to deduce the original packets with decoding operations. If the network is solvable with linear operations we say that the network is linearly solvable.

\subsection{XOR-ed coding}

First we recall that in \cite{Riis}, Riis showed that every solvable multicast network has a linear solution over $GF(2)$ in some vector dimension. The essence of his proof relies on the fact that any two finite fields with the same cardinality are isomorphic. Thus, instead of working in a finite field $GF(2^n)$ for which the conditions of the linear-code multicast (LCM) theorem \cite[Th. 5.1]{Li2003} are met, he showed that it is possible to work in the isomorphic vector space $GF(2)^n$ that is an extension field over the prime field $GF(2)$. We formalize the work in the vector space $GF(2)^n$ with the following:

\begin{definition}
A XOR-ed coding is a coding that is realized exclusively by bitwise XOR operations between packets with equal length. Hence, it is a parallel bitwise linear transformation of $n$ source bits $x=(x_1,\ldots,x_n)$ by a $n \times n$ nonsingular matrix $K$, i.e., $y = K \cdot x$.
\end{definition}

In \cite{Riis} it was also shown that there are simple network topologies where encoding in $GF(2)$ cannot reach the network capacity with the original bandwidth or by sending data in just one phase. However, it was shown that the network capacity by XOR-ed coding can be achieved either by increasing the bandwidth or the number of phases so that they match the dimension of the extended vector space $GF(2)^n$. In this paper we take the approach to send data in several phases $p$ instead of increasing the bandwidth.


\begin{theorem}\label{XORed-unicast}
For any linearly solvable network topology with $\text{maxflow}(t_1)>1$, the sufficient condition for a single sink $t_1$ to reach its capacity in each of $p$ phases by XOR-ed coding is to receive $n$ linearly independent packets $x=(x_1, \ldots, x_n)$, where $n=p \times \text{maxflow}(t_1)$.
\end{theorem}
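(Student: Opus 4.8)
The plan is to reduce the statement to a single linear-algebraic fact over $GF(2)$, using the Max-Flow Min-Cut Theorem to fix the per-phase budget and the Definition of XOR-ed coding to supply the encoding map. First I would fix the global encoding: since the coding is XOR-ed, every packet leaving the source $s$ is a bitwise linear combination $y = K\cdot x$ of the $n$ source bits $x=(x_1,\ldots,x_n)$, with $K$ a nonsingular $n\times n$ matrix over $GF(2)$. Each packet the sink $t_1$ eventually holds is therefore one such combination, and its header records the corresponding coding vector of coefficients. I would collect the coding vectors of the $n$ received packets as the rows of a matrix $M$ over $GF(2)$.

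Next I would count. By the Max-Flow Min-Cut Theorem, $\text{mincut}(s,t_1)=\text{maxflow}(t_1)$, so in a single phase at most $\text{maxflow}(t_1)$ packets can cross a minimal cut and reach $t_1$; accumulated over the $p$ phases this caps the total at $p\times\text{maxflow}(t_1)=n$ packets. The hypothesis supplies exactly $n$ such packets and, crucially, that they are linearly independent. Hence $M$ is a square $n\times n$ matrix of full rank $n$, i.e.\ nonsingular over $GF(2)$, and is therefore invertible using only XOR operations (Gaussian elimination over $GF(2)$ is bitwise XOR throughout).

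The decoding step is then immediate: the sink forms $x=M^{-1}\cdot y$, recovering all $n$ source packets by bitwise XOR alone, which is a genuine XOR-ed decoding in the sense of the Definition. Recovering $n=p\times\text{maxflow}(t_1)$ packets over $p$ phases delivers $\text{maxflow}(t_1)=n/p$ decoded packets per phase; since this matches the per-phase cut bound established above, it is the largest rate the network admits, so $t_1$ reaches its capacity in each phase. This proves the sufficiency direction, which is exactly what the theorem asserts.

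I expect the main obstacle to be conceptual bookkeeping rather than computation. One must argue carefully that the $n$ coding vectors really do assemble into a square, full-rank system despite the network delay, which spreads the arrivals of the $n$ packets across (and slightly beyond) the $p$ phases, and that the per-phase capacity is correctly read off from the aggregate of $n$ packets decoded over $p$ phases. Making the chain ``linear independence of received packets $\Longleftrightarrow$ nonsingularity of $M$ $\Longleftrightarrow$ decodability'' precise, and tying it back to the cut bound so that capacity is shown to be reached and not merely approached, is where the care is needed.
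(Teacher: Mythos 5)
Your proposal is correct and follows essentially the same route as the paper's own proof: both reduce the claim to linear algebra over $GF(2)$, using the hypothesis that the $n = p \times \text{maxflow}(t_1)$ received packets are linearly independent to conclude that the encoding system is invertible by XOR operations alone, so the sink decodes all source packets and meets the per-phase cut bound. Your version merely spells out details the paper leaves implicit (the explicit matrix $M$ of coding vectors, its nonsingularity, and the appeal to Max-Flow Min-Cut for the per-phase budget), which is a welcome sharpening but not a different argument.
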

\begin{proof}
Assume that the network topology is linearly solvable. That means there exists a vector space $GF(2)^n$ where we can encode every $n$ source bits with a bijective function $K$, i.e., $y=K \cdot x$. Having in mind that the source $s$ succeeds to send $n$ encoded packets to $t_1$ in $p$ phases, and the max-flow in the network is $\text{maxflow}(t_1)>1$, we have that $n=p \times \text{maxflow}(t_1)$ and the sink $t_1$ receives $n$ packets after $p$ phases via $\text{maxflow}(t_1)$ disjoint paths. In order to have a successful recovery of the initial $n$ packets, the received packets should be linearly independent.
\end{proof}
\vspace{-0.3cm}
Based on Theorem \ref{XORed-unicast} we can prove the following:
\begin{theorem}\label{XORed-duocast}
For any linearly solvable network topology and for any two sinks $T=\{t_1, t_2\}$ that have $\text{maxflow}(t)=\text{maxflow}(t_1)=\text{maxflow}(t_2)$, there always exists a XOR-ed coding for $n=p \times \text{maxflow}(t)$ packets that achieves the multicast capacity in each of $p$ phases.
\end{theorem}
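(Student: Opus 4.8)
The plan is to reduce the two-sink statement to two applications of Theorem \ref{XORed-unicast} and then to exhibit a \emph{single} XOR-ed coding that satisfies both simultaneously. First I would invoke linear solvability together with Riis's result to guarantee that the whole multicast admits a solution over $GF(2)$ in vector dimension $n$, so that there is a vector space $GF(2)^n$ in which the source encodes the $n = p \times \text{maxflow}(t)$ source packets by one nonsingular matrix $K$, i.e. $y = K \cdot x$. Because coding is performed only at the source while the intermediate nodes merely forward, the goal becomes: choose $K$ so that the $n$ packets arriving at $t_1$ over its $\text{maxflow}(t)$ disjoint paths during the $p$ phases are linearly independent, and likewise for $t_2$. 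By Theorem \ref{XORed-unicast}, this independence is exactly the sufficient condition for each sink to reach its capacity in every phase, so the residual task is purely to produce a common $K$.

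Next I would construct $K$ from a Latin rectangle of the appropriate order, as in the motivating example: each column of the rectangle prescribes which source packets $x_i$ are XOR-ed into a given coded packet $c_j$, and the $p$ phases correspond to partitioning the $n$ columns into $p$ blocks of $\text{maxflow}(t)$ columns each. The defining property of a Latin rectangle, namely that no symbol repeats within a row or a column, is what I would use to argue that the resulting $K$ is nonsingular over $GF(2)$; hence the coded packets $c_1,\ldots,c_n$ generated at the source are linearly independent and the inverse $K^{-1}$ needed for decoding exists.

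The decisive step is to show that this one matrix $K$ serves both sinks at once. Here I would use the hypothesis $\text{maxflow}(t_1)=\text{maxflow}(t_2)=\text{maxflow}(t)$ together with the fact that the intermediate nodes only forward: each sink can be routed the full family $c_1,\ldots,c_n$ (the same pool of source-generated packets, merely carried over a different set of disjoint paths and distributed $\text{maxflow}(t)$ per phase), so each sink accumulates $n$ linearly independent packets after $p$ phases and recovers $x_1,\ldots,x_n$ by applying the single inverse $K^{-1}$. The balanced, symmetric structure of the Latin design is precisely what supplies a concrete invertible $K$ and makes the two sinks' decoding problems structurally identical.

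The main obstacle I anticipate is exactly this simultaneity. An arbitrary nonsingular $K$ reaching full rank for one sink need not do so for the other once the packets are split across the phases and the two sinks' disjoint paths; the equal-max-flow assumption is what I expect to be used crucially to force the two decoding systems to coincide, and the combinatorial balance of the Latin rectangle is what I would lean on to verify that every admissible assignment of columns to a sink's paths yields an invertible subsystem. Establishing that such a routing of all $n$ columns to both sinks within $p$ phases actually exists, rather than merely that $K$ is invertible, is the part of the argument that will require the most care.
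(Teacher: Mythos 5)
Your proposal correctly sets up the reduction --- a single source-coding matrix $K$, intermediate nodes that only forward, and the observation from Theorem \ref{XORed-unicast} that each sink merely needs to receive $n$ linearly independent packets over its $p$ phases --- but it stops exactly where the paper's proof does its actual work. You explicitly flag the decisive step (``Establishing that such a routing of all $n$ columns to both sinks within $p$ phases actually exists \dots{} is the part of the argument that will require the most care'') and then never supply it. That step is the entire content of the paper's proof: it fixes the coding for $t_1$, lets $U_1$ and $U_2$ be the sets of nodes with edges into $t_1$ and $t_2$ respectively, and performs a case analysis on the intersection $U_{1,2}=U_1\cap U_2$. In the three cases ($U_{1,2}=\emptyset$; $|U_{1,2}|=k$ with $0<k<\text{maxflow}(t)$; $U_{1,2}=U_1$) it exhibits an explicit partition of the $n$ coded packets into disjoint subsets of $p$ packets each, assigned to $t_2$'s incoming nodes, reusing the packets already flowing through shared nodes and routing only the remaining $p\times(\text{maxflow}(t)-k)$ packets through the nodes in $U_2\setminus U_1$. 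Without some argument of this kind, your claim that ``each sink can be routed the full family $c_1,\ldots,c_n$'' is an assertion of the theorem's conclusion, not a proof of it.

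A secondary problem: you lean on the Latin-rectangle structure to conclude that $K$ is nonsingular over $GF(2)$, but the combinatorial Latin property alone does not give this. The paper's Proposition \ref{nonsingular} establishes only a \emph{necessary} parity condition (the number of rows $k$ must be odd --- any balanced matrix with an even number of ones per row is automatically singular over $GF(2)$), not a sufficient one; nonsingularity must be checked or arranged separately. This is also why the paper's proof of Theorem \ref{XORed-duocast} does not invoke Latin rectangles at all: it takes the existence of a suitable coding from Theorem \ref{XORed-unicast} (i.e., from linear solvability and Riis's transfer to $GF(2)^n$) and concentrates entirely on the routing question, which is the part your proposal leaves open.
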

\vspace{-0.254cm}
\begin{proof}
For the sink $t_1$ we apply Theorem \ref{XORed-unicast} and find one XOR-ed coding that achieves the capacity in each of $p$ phases. Let us denote by $U_1=\{ u_{1,i} |\ \text{there is an} \\ \text{edge}\ (u_{1,i}, t_1) \in E \}$ the nodes that are directly connected and send packets to the sink node $t_1$. We have that $|U_1|=\text{maxflow}(t)$, and the set of $n$ packets is partitioned in $\text{maxflow}(t)$ disjoint subsets $Y_{1,1}, \ldots, Y_{1,\text{maxflow}(t)}$ each of them having $p$ packets. The subset $Y_{1,i}$ comes from the node $u_i$, $i=1,\ldots, \text{maxflow}(t)$.

The set $U_2=\{ u_{2,i} | \text{there is an edge} (u_{2,i}, t_2) \in E \}$ is a set of nodes that are directly connected and send packets to the sink node $t_2$. We denote the intersection between the sets of nodes $U_1$ and $U_2$ as $U_{1,2}=U_1 \bigcap U_2$.
The following three situations are considered:
\begin{enumerate}
  \item There are no mutual nodes that send packets to both sinks $t_1$ and $t_2$, i.e., $U_{1,2}=\emptyset$. In that case find one partition of the set of $n$ packets in $\text{maxflow}(t)$ disjoint subsets $\Gamma_1=\{Y_{2,1} \ldots, Y_{2,\text{maxflow}(t)} \}$ each of them having $p$ packets. The sets of packets $Y_{2,j}$ are delivered to the sink $t_2$ via the node $u_{2, j}$, $j=1,\ldots,\text{maxflow}(t)$. The multicast capacity for the sink $t_2$ is achieved in each of $p$ phases.
  \item There are nodes that send packets to both sinks $t_1$ and $t_2$, i.e., $U_{1,2}=\{ u_{(1,2)_{\nu_1}}, \ldots, u_{(1,2)_{\nu_k}} \}$. Denote the nodes that are in $U_2 \backslash U_1 = \{ u_{2_{\nu_1}}, \ldots\\ \ldots ,u_{2_{\nu_{\text{maxflow}(t)-k}}} \}$. In that case, the sink $t_2$ receives from the nodes in $U_{1,2}$ the same packets that are delivered to the sink $t_1$. The number of the remaining packets that have to be delivered to $t_2$ is exactly $p \times (\text{maxflow}(t) - k)$. Find one partition of $\text{maxflow}(t) - k$ disjoint subsets $\Gamma_2=\{Y_{2,1} \ldots, Y_{2,\text{maxflow}(t) - k} \}$ each of them having $p$ packets. The sets of packets $Y_{2,j}$ are delivered to the sink $t_2$ via the node $u_{2,\nu_j}$, $j=1,\ldots,\text{maxflow}(t)-k$. The multicast capacity for the sink $t_2$ is achieved in each of $p$ phases.
  \item All the nodes that send packets to the sink $t_1$, send packets to the sink $t_2$ as well, i.e., $U_{1,2}=U_1 \bigcap U_2 = U_1$. In that case, the sink $t_2$ receives from the nodes in $U_{1,2}$ the same packets that are delivered to the sink $t_1$. The multicast capacity for the sink $t_2$ is achieved in each of $p$ phases.
\end{enumerate}
\end{proof} Note that the proof of Theorem \ref{XORed-duocast} is similar to the work by Jaggi et al. \cite{Jaggi} where they discuss a construction of general codes using simple algorithms.

As a consequence of Theorems \ref{XORed-unicast} and \ref{XORed-duocast} we can post the following:
\begin{theorem}\label{XORed-multicast}
For any linearly solvable network topology and for any set of $N$ sinks $T=\{t_1, \ldots, t_N\}$ that have $\text{maxflow}(t)=\text{maxflow}(t_1)=\ldots=\text{maxflow}(t_N)$, there always exists a XOR-ed coding for $n=p \times \text{maxflow}(t)$ packets that achieves the multicast capacity in each of $p$ phases.
\end{theorem}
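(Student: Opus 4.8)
The plan is to argue by induction on the number of sinks $N$, taking Theorem~\ref{XORed-unicast} (the case $N=1$) and Theorem~\ref{XORed-duocast} (the case $N=2$) as the base cases. The inductive hypothesis is that for any $N-1$ sinks sharing the common value $\text{maxflow}(t)$ there is a \emph{single} XOR-ed coding, i.e. one nonsingular source matrix $K$ with $y=K\cdot x$, for which each of those sinks receives $n=p\times\text{maxflow}(t)$ linearly independent coded packets and hence decodes in every phase. The inductive step then adds a new sink $t_N$ with $\text{maxflow}(t_N)=\text{maxflow}(t)$ and shows that the \emph{same} matrix $K$ continues to serve all $N$ sinks.

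For the step itself, first I would fix the coding $K$ already guaranteed for $\{t_1,\ldots,t_{N-1}\}$; in particular it works for $t_1$. The key observation is that $K$ is nonsingular, so the $n$ coded packets $c_1,\ldots,c_n$ it produces are linearly independent, and therefore \emph{any} sink that receives all $n$ of them can apply $K^{-1}$ and recover $x_1,\ldots,x_n$. It thus suffices to show that the network can deliver the full set of $n$ coded packets to $t_N$, with $\text{maxflow}(t)$ of them per phase along $t_N$'s incoming edges. I would obtain this exactly as in Theorem~\ref{XORed-duocast}, applying its three-case analysis to the pair $(t_1,t_N)$: letting $U_N$ be the nodes feeding $t_N$ and $U_{1,N}=U_1\cap U_N$, one checks the cases $U_{1,N}=\emptyset$, $\emptyset\neq U_{1,N}\subsetneq U_1$, and $U_{1,N}=U_1$. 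In each case the argument partitions the $n$ packets into $\text{maxflow}(t)$ blocks of $p$ packets and routes them to $t_N$ through its $\text{maxflow}(t)$ incoming nodes; since $|U_N|=\text{maxflow}(t)$ and $t_N$ realizes its max-flow over disjoint paths, this routing exists by linear solvability, and $t_N$ ends up receiving all $n$ coded packets.

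The step that needs the most care, and which I expect to be the main obstacle, is the claim that supplying $t_N$ does not disturb the sinks already handled. This is where I would lean on the acyclicity of $G$ together with the fact that every sink attains the \emph{same} value $\text{maxflow}(t)$ over its own disjoint path system: the routing choices that steer the coded packets into $t_N$'s incoming nodes are internal to the sub-network feeding $t_N$ and leave untouched the disjoint paths delivering packets to $t_1,\ldots,t_{N-1}$. Because the source transmits one and the same set $c_1,\ldots,c_n$ regardless of destination, no sink's received set is changed by how another sink is supplied, so all $N$ sinks decode with the single matrix $K$. Making this independence precise, rather than treating it as visually obvious from the figure, is the one point where the proof has to do genuine work; once it is granted, the induction closes and yields a XOR-ed coding achieving the multicast capacity in each of the $p$ phases for all $N$ sinks.
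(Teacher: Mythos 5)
Your proposal follows essentially the same route as the paper's own (sketched) proof: mathematical induction on the number of sinks $N$, with Theorems~\ref{XORed-unicast} and~\ref{XORed-duocast} as the base, and the three-case analysis of $U_1\cap U_N$ from Theorem~\ref{XORed-duocast} reused to accommodate the newly added sink. Your additional attention to why routing packets toward $t_N$ does not disturb the sinks already served is a point the paper's sketch leaves entirely implicit, but it is a refinement of the same argument rather than a different approach.
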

\begin{proof}
(\textit{Sketch}) First, we recall the construction of generic linear codes presented in the LCM theorem in \cite[Th. 5.1]{Li2003}. Second, we use the transformation to equivalent codes over $GF(2)^n$ as it was shown in \cite{Riis}. Then, the proof is a straightforward application of the mathematical induction by the number of sinks $N$. Let us suppose that the claim of the theorem is correct for $N-1$ sinks. By adding a new $N$-th sink we consider again three possible situations as in Theorem \ref{XORed-duocast}.
\end{proof}


\section{Construction of XOR-ed Coding} \label{construction}

In this section we describe the construction of codes over $GF(2)$. Instead of working with completely random binary matrices, in the remaining part of this paper we work with nonsingular binary matrices that have some specific structure related to randomly generated Latin square or Latin rectangle. We do not reduce the space of possible random linear network encoding schemes, since the number of Latin squares and Latin rectangles of order $n$ increases proportionally with factorial of $n$. Therefore, in our approach we have virtually an endless repository of encoding schemes that have the benefits from both worlds: they are randomly generated, but they have a certain structure and offer plausible security properties.

In order to introduce our approach, we briefly use several definitions that the reader can find in \cite{Stinsonbook} and \cite{Combinatorial}.

\begin{definition}
A Latin square of order $n$ with entries from an $n$-set X is an $n\times n$
array $L$ in which every cell contains an element of X such that every row of $L$ is a
permutation of X and every column of $L$ is a permutation of X.
\label{Latinsquare}
\end{definition}

\begin{definition}
A $k\times n$ Latin rectangle is a $k\times n$ array (where $k \leq n$) in which each cell contains
a single symbol from an $n$-set X, such that each symbol occurs exactly once in each
row and at most once in each column.
\label{Latinrectangle}
\end{definition}

\begin{figure}[!ht]
\vspace{-0.5cm}
\centering
\includegraphics[width=2.5in]{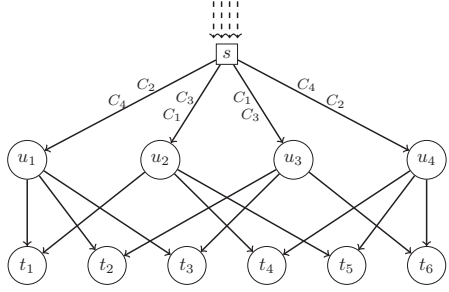}

\caption{A 4-dimensional binary linear multicast in a single source multicast network with delay}
\label{vtora}
\end{figure}

For generating a Latin square, one can always start with a permutation of $n$ elements that is a trivial $1\times n$ Latin rectangle and can use the old Hall's marriage theorem \cite{H123} to construct new rows until the whole Latin square is completed. However, this approach does not guarantee that the generated Latin squares are chosen uniformly at random. In order to generate Latin squares of order $n$ that are chosen uniformly at random we use the algorithm of Jacobsen and Matthews \cite{JCD:JCD3}. Further, in our approach we sometimes split the Latin square into two Latin rectangles (upper and lower), and work with the algebraic objects (matrices or block designs) that are related to either the upper or the lower Latin rectangle.

As a convention, throughout this paper, the number of packets $n$ that are sent from the source is equal to the number of columns in the Latin square or Latin rectangle.


\begin{example}
\label{Example01}
As shown in Fig. \ref{vtora}, we assume that the source wants to send four packets $x_1,\ldots,x_4$ to the sink nodes, and that each sink node has maxflow($t_k$) = 2, ($k=1,\ldots,6$). The sink nodes receive data from different pair of intermediate nodes, $u_i$, ($i=1,\ldots,4$). Our aim is all six sink nodes to be able to reconstruct the source packets that are exclusively coded in $GF(2)$.

Let us take the following Latin square and split it into two Latin rectangles:
$$\small
L =
\begin{bmatrix}
2 & 4 & 1 & 3\\
1 & 3 & 2 & 4\\
3 & 2 & 4 & 1\\
\hline
4 & 1 & 3 & 2\\
\end{bmatrix}.
$$
Each column from the $3\times 4$ upper Latin rectangle represents a combination of source packets in a coded packet $c_i$, $i=1,\ldots,4$.
Using the incidence matrix $M$ of the Latin rectangle the source computes the coded packets.
\end{example}

\begin{definition}
Let $(X,A)$ be a design where $X = \{x_1,\ldots, x_v\}$ and $A = \{A_1,\ldots,A_b\}$. The incidence matrix of $(X,A)$ is the $v\times b$ 0-−1 matrix $M = (m_{i,j})$ defined by the rule
$
m_{i,j} =
\begin{cases}
1, & \text{if}\  x_i\in A_j, \nonumber\\
0, & \text{if}\  x_i\notin A_j. \nonumber\\
\end{cases}
$
\label{incidencematrix}
\end{definition}

\begin{proposition}
The incidence matrix $M = (m_{i,j})$ of any Latin rectangle with dimensions $k\times n$ is balanced matrix with $k$ ones in each row and each column.
\label{balanced}
\end{proposition}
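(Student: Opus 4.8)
The plan is to make explicit the design $(X,A)$ from which the incidence matrix $M$ is built, and then to read off the two counts directly from the defining properties of a Latin rectangle. Following Example \ref{Example01}, I would take $X=\{1,\ldots,n\}$ to be the set of symbols and let the blocks be the columns of the rectangle: for each $j=1,\ldots,n$, set $A_j$ to be the set of symbols that appear in the $j$-th column. Then, by Definition \ref{incidencematrix}, $M=(m_{i,j})$ is the $n\times n$ matrix with $m_{i,j}=1$ precisely when symbol $i$ occurs in column $j$.

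For the column counts I would argue as follows. The $j$-th column of the rectangle has exactly $k$ cells, and by Definition \ref{Latinrectangle} each symbol occurs at most once in a column, so these $k$ cells carry $k$ distinct symbols. Hence $|A_j|=k$, which is exactly the number of ones in the $j$-th column of $M$.

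For the row counts, fix a symbol $i\in X$. By Definition \ref{Latinrectangle} the symbol $i$ occurs exactly once in each of the $k$ rows, so it appears $k$ times in total; and since no two of these occurrences can lie in the same column (again, at most once per column), they fall in $k$ distinct columns. Therefore symbol $i$ belongs to exactly $k$ of the blocks $A_j$, which is the number of ones in row $i$ of $M$. Combining the two counts shows that $M$ has $k$ ones in every row and every column, i.e., it is balanced.

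The only genuinely delicate point is the identification in the first paragraph: the proposition speaks of the incidence matrix \emph{of a Latin rectangle}, whereas Definition \ref{incidencematrix} is phrased for an abstract design, so one must commit to the columns-as-blocks interpretation (consistent with the coded-packet construction in Example \ref{Example01}) before the counting arguments apply. Once that identification is fixed, the result is immediate from the two Latin-rectangle axioms and requires no computation.
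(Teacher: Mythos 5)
Your proof is correct and follows essentially the same counting argument as the paper's: one count comes from each column of the rectangle containing $k$ distinct symbols, the other from each symbol occurring exactly once in each of the $k$ rows and at most once per column, hence in exactly $k$ distinct columns. The only difference is notational---you follow Definition \ref{incidencematrix} literally (rows of $M$ indexed by symbols, columns by blocks), whereas the paper's Example \ref{Example01} uses the transpose of that matrix---but since the claim is symmetric in rows and columns, this is immaterial.
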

\begin{proof}
From the definition of the incidence matrix it follows that the number of ones in each row is equal to the number of elements $k$ in each column of the Latin rectangle. On the other hand, since each row of the Latin rectangle is a permutation of $n$ elements, and there are no elements that occur twice in each column, the number of ones in each column can be neither less nor larger than $k$.
\end{proof}

\begin{note}
The incidence matrix $M$ of a $k\times n$ Latin rectangle is always balanced. However, the inverse matrix of the incidence matrix $M^{-1}$ is not always balanced.
\label{note1}
\end{note}
\vspace{-0.75cm}
\begin{proposition}
The necessary condition an incidence matrix $M = (m_{i,j})$ of a $k\times n$ Latin rectangle to be nonsingular in $GF(2)$ is $k$ to be odd, i.e., $k=2 l + 1$.
\label{nonsingular}
\end{proposition}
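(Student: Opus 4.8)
The plan is to prove the contrapositive: if $k$ is even then $M$ is singular over $GF(2)$. Since $M$ is the incidence matrix of a $k\times n$ Latin rectangle, Proposition \ref{balanced} already tells me that $M$ is an $n\times n$ matrix with exactly $k$ ones in each row (and in each column). This constant row-sum is the only structural fact I will need; the detailed combinatorics of the rectangle plays no role beyond what is packaged into Proposition \ref{balanced}.

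First I would introduce the all-ones column vector $\mathbf{1}=(1,\ldots,1)^T\in GF(2)^n$ and compute the product $M\cdot\mathbf{1}$. By the definition of matrix multiplication, the $i$-th entry of $M\cdot\mathbf{1}$ is $\sum_j m_{i,j}$, i.e. the number of ones in row $i$ reduced modulo $2$. By the balance property this count equals $k$ for every row, so $M\cdot\mathbf{1}=(k,\ldots,k)^T$ in $GF(2)^n$.

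Now if $k$ is even, then $k\equiv 0\pmod 2$, so $M\cdot\mathbf{1}=\mathbf{0}$. Since $\mathbf{1}\neq\mathbf{0}$, this exhibits a nonzero vector in the kernel of $M$, whence $M$ is singular over $GF(2)$. Taking the contrapositive, nonsingularity of $M$ forces $k$ to be odd, which is exactly the claimed condition $k=2l+1$.

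There is no genuine obstacle in this direction: once the balance property is in hand, the argument is a one-line kernel exhibit. I would only be careful about two small points, namely that $M$ is genuinely square ($n\times n$), so that singularity coincides with the existence of a nontrivial kernel, and that the row-sum reduction is taken in $GF(2)$ rather than over the integers. I would also remark explicitly that the condition is only \emph{necessary} and not sufficient, so the proposition makes no claim in the converse direction; an odd $k$ alone does not guarantee invertibility, consistent with Note \ref{note1}.
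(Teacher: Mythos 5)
Your proof is correct, but it takes a genuinely different route from the paper's. The paper also argues the contrapositive (assume $k=2l$ even), but it does so through the Leibniz expansion $\det(M)=\sum_{\sigma\in S_n} sgn(\sigma)\prod_i m_{i,\sigma_i}$, asserting that balance forces the number of nonzero terms to be even, hence the determinant vanishes in $GF(2)$. You instead exhibit an explicit kernel vector: since every row of $M$ has exactly $k$ ones (Proposition \ref{balanced}), $M\cdot\mathbf{1}=(k,\ldots,k)^{T}=\mathbf{0}$ in $GF(2)^{n}$ when $k$ is even, so the all-ones vector witnesses singularity directly; equivalently, the columns of $M$ sum to the zero vector and are therefore linearly dependent. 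Your route is arguably tighter: the paper's key step --- that ``every summand in the Leibniz formula gives an even number of nonzero products'' --- is really a claim about the parity of the number of permutations $\sigma$ with $\prod_i m_{i,\sigma_i}=1$ (the permanent of $M$ mod $2$), and the paper leaves its justification vague; the cleanest way to make it rigorous is essentially your column-dependence observation. Both proofs rest only on Proposition \ref{balanced} and both correctly treat the condition as merely necessary (your closing remark about Note \ref{note1} and non-sufficiency matches the paper's intent). What the paper's framing buys is an explicit link between the $GF(2)$ determinant and permutation counting; what yours buys is a one-line, fully rigorous kernel argument.
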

\begin{proof}
Assume that $k$ is even, i.e., $k=2 l$. Recall that a matrix $M$ is nonsingular in $GF(2)$ if and only if its determinant is 1 (or it is singular if and only if its determinant is 0). Recall further the Leibniz formula for the determinant of an $n\times n$ matrix $M$: $det(M) = \sum_{\sigma \in S_n} sgn(\sigma) \prod_{i=1}^n m_{i,\sigma_i},$ where the sum is computed over all elements of the symmetric group of $n$ elements $S_n$, i.e., over all permutations $\sigma \in S_n$, and $sgn(\sigma)$ is the signature (or the parity of the permutation) whose value is $+1$ or $-1$. The elements $m_{i,\sigma_i}$ are the elements $m_{i,j}$ of the matrix $M$ where the value for the index $j=\sigma_i$ is determined as the $i$--th element of the permutation $\sigma$.

If $k=2 l$ is even, from Proposition \ref{balanced} and from the fact that operations are performed in $GF(2)$, it follows that every summand in the Leibniz formula gives an even number of nonzero products, thus the final sum must be even, i.e., the determinant in $GF(2)$ is 0.
\end{proof}

The corresponding $4\times 4$ incidence matrix of the Latin rectangle in Example \ref{Example01} is nonsingular in $GF(2)$ (Proposition~\ref{nonsingular}). $M$ is represented as
$$\small
M =
\begin{bmatrix}
1 & 1 & 1 & 0\\
0 & 1 & 1 & 1\\
1 & 1 & 0 & 1\\
1 & 0 & 1 & 1\\
\end{bmatrix}.
$$

A direct consequence from Theorem~\ref{XORed-unicast} is the following:

\begin{corollary}\label{corollaryPhases}
A sink node $t\in T$ with $\text{maxflow}(t)$ can receive $n$ source packets, encoded with the incidence matrix of a $k\times n$ Latin rectangle in $GF(2)$, in $p=\lceil {\frac{n}{\text{maxflow}(t)}}\rceil$ phases. In each phase the sink node reaches its $\text{maxflow}(t)$.
\end{corollary}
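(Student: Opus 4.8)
The statement is a reformulation of Theorem~\ref{XORed-unicast}, so the plan is to unwind the relation $n = p \times \text{maxflow}(t)$ for the variable $p$ and then verify that the incidence-matrix encoding actually supplies the linear independence that the theorem demands. The two ingredients are therefore (i) a counting argument for the number of phases and (ii) a nonsingularity argument for the code.

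First I would recall from Theorem~\ref{XORed-unicast} that a single sink $t$ reaches its capacity in each of $p$ phases provided it receives $n = p \times \text{maxflow}(t)$ linearly independent packets, delivered $\text{maxflow}(t)$ at a time along the $\text{maxflow}(t)$ disjoint paths. Reading this as a constraint on $p$ for a fixed number $n$ of source packets gives $p = n/\text{maxflow}(t)$ whenever $\text{maxflow}(t)$ divides $n$, and $p = \lceil n/\text{maxflow}(t)\rceil$ in general, since a final, possibly partial, phase is still counted as one phase.

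Second, I would supply the linear-independence hypothesis required by the theorem. The source encodes by $y = M\cdot x$, where $M$ is the incidence matrix of a $k\times n$ Latin rectangle. By Proposition~\ref{nonsingular}, choosing $k$ odd makes $M$ nonsingular over $GF(2)$, so the $n$ coded packets are linearly independent and the sink can recover $x$ by applying $M^{-1}$. This is exactly the condition invoked in Theorem~\ref{XORed-unicast}, and therefore in each full phase the sink occupies all $\text{maxflow}(t)$ disjoint paths and attains its max-flow.

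The main obstacle is the ceiling: Theorem~\ref{XORed-unicast} is phrased for an exact product $n = p\times\text{maxflow}(t)$, whereas the corollary allows $n$ not to be a multiple of $\text{maxflow}(t)$. I would resolve this by padding the source block with $p\cdot\text{maxflow}(t) - n$ dummy packets (for instance zero packets whose positions are known at the sink), so that a $k\times(p\cdot\text{maxflow}(t))$ Latin rectangle and its nonsingular incidence matrix apply verbatim; the sink simply discards the known padding after decoding. With this padding the phrase ``reaches its $\text{maxflow}(t)$ in each phase'' holds literally for all $p$ phases, while without padding the last phase should be read as carrying the residual $n - (p-1)\,\text{maxflow}(t)$ packets.
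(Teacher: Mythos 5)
Your overall reduction is the same one the paper intends: Corollary~\ref{corollaryPhases} is presented there as a direct consequence of Theorem~\ref{XORed-unicast} with no separate proof, and your phase count $p=\lceil n/\text{maxflow}(t)\rceil$, together with the padding device for the case when $\text{maxflow}(t)$ does not divide $n$, is actually more careful than the paper, which silently tolerates a partial final phase.

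However, your step (ii) contains a genuine flaw. You write that ``by Proposition~\ref{nonsingular}, choosing $k$ odd makes $M$ nonsingular over $GF(2)$.'' Proposition~\ref{nonsingular} asserts only a \emph{necessary} condition: its proof shows that $k$ even forces $\det(M)=0$ over $GF(2)$; it says nothing about the converse, and the converse is false. A concrete counterexample: glue two disjoint $3\times 3$ Latin squares on the symbol sets $\{1,2,3\}$ and $\{4,5,6\}$ into a $3\times 6$ Latin rectangle (so $k=3$ is odd), e.g.\ with rows $(1,2,3,4,5,6)$, $(2,3,1,5,6,4)$, $(3,1,2,6,4,5)$. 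Each of the first three columns contains exactly the symbols $\{1,2,3\}$ and each of the last three exactly $\{4,5,6\}$, so the incidence matrix is block diagonal with two all-ones $3\times 3$ blocks and has rank $2$ over $GF(2)$ --- it is singular. Hence oddness of $k$ alone does not deliver the linear independence that Theorem~\ref{XORed-unicast} requires, which is precisely what your step (ii) was supposed to supply. That step must instead either be phrased as an explicit hypothesis (the chosen rectangle's incidence matrix is nonsingular, as is verified for the specific $M$ in Example~\ref{Example01}) or be supported by an existence argument for nonsingular incidence matrices, something the paper itself also leaves unproved in Corollary~\ref{corollary1}.
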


Following Corollary~\ref{corollaryPhases} the number of phases in which packets are sent depends from the total number of packets and maxflow($t_k$).

Using $M$ the source computes the vector of coded packets as
$$
\mathbf{c} = M \mathbf{x} = [c_1, c_2, c_3, c_4]^\top
$$
where $\mathbf{x} = [x_1, x_2, x_3, x_4]^\top$ is a vector of the source packets. The coded packets are XOR-ed combinations of the source packets, i.e.,
\begin{IEEEeqnarray*}{rCl}\scriptsize
\begin{aligned}
& c_1 = x_1\oplus x_2\oplus x_3,\\
& c_2 = x_2\oplus x_3\oplus x_4,\\
& c_3 = x_1\oplus x_2\oplus x_4,\\
& c_4 = x_1\oplus x_3\oplus x_4.\\
\end{aligned}
\end{IEEEeqnarray*}
The source further prepends the information from the incidence matrix to each of the coded packets. The vector of packets that are sent becomes as follows: $\mathbf{C} = \{(1, 2, 3, c_1), (2, 3, 4, c_2), (1, 2, 4, c_3), (1, 3, 4, c_4)\}$ $=\{C_1, C_2, C_3, C_4\}$.
The sink nodes receive in each phase a pair of different packets as shown in Table~\ref{tabela}. Their buffer should be large enough to store the received packets $C_i$, $i=1,\ldots,4$.
\begin{table}[t]

\vspace{-0.5cm}
\caption{Description of receiving coded packets in each phase at the sink nodes}
\vspace{-0.5cm}
\begin{center}
\begin{tabular}{|c||@{\ }c@{\ }|@{\ }c@{\ }|@{\ }c@{\ }|@{\ }c@{\ }|@{\ }c@{\ }|@{\ }c@{\ }|}
\hline
&$t_1$&$t_2$&$t_3$&$t_4$&$t_5$&$t_6$\\
\hline
First phase&$C_4, C_1$&$C_4, C_3$&$C_4, C_3$&$C_1, C_2$&$C_1, C_2$&$C_3, C_2$\\
\hline
Second phase&$C_2, C_3$&$C_2, C_1$&$C_2, C_1$&$C_3, C_4$&$C_3, C_4$&$C_1, C_4$\\
\hline
\end{tabular}
\end{center}
\vspace{-0.3cm}
\label{tabela}
\end{table}
The decoding at the sink nodes is performed by $M^{-1}$. Each sink node computes $M^{-1}$ from the prepended indexes.
The original packets $x_i$, $i=1,\ldots,4$, are reconstructed as $ \mathbf{x} = M^{-1} \mathbf{c}.$ Note that although our approach is similar to \cite{Riis}, we use a systematic selection of the encoding functions and we do not send plain packets on the disjoint paths.



\section{Additional Plausible Security Properties of the Balanced XOR-ed Coding} \label{security}

The work with incidence matrices related to randomly generated Latin rectangles is actually a work with balanced block designs. However, as we noted in Note~\ref{note1}, it is not necessary both the incidence matrix and its inverse matrix to be completely balanced. If we are interested in the complexity of decoding and the security issues when an adversary can successfully decode some sniffed packets, then the easiest way to address these issues is to give equal level of security to all encoded packets. In our approach this can be easily achieved by switching the roles of the incidence matrix and its inverse matrix: the encoding of the source packets is done with the inverse matrix of the incidence matrix and decoding of the coded packets is done with the incidence matrix. By applying this approach, decoding of any of the source packets requires an equal number of coded packets.

\begin{corollary}
For each value of $\text{maxflow}(t)$ and a number of source packets $n$ which is multiple of $\text{maxflow}(t)$, there exists a Latin rectangle with $n-1$ or $n-2$ rows and its incidence matrix can be used for decoding.
\label{corollary1}
\end{corollary}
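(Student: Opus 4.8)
The plan is to reduce the statement to the existence of a Latin rectangle with an odd number of rows whose incidence matrix is nonsingular over $GF(2)$, since ``can be used for decoding'' means precisely that the incidence matrix $M$ is invertible. By Proposition~\ref{nonsingular} the number of rows $k$ must be odd, so I would take $k$ to be the largest odd integer strictly smaller than $n$: this is $k=n-1$ when $n$ is even and $k=n-2$ when $n$ is odd. For $n$ odd the full square $k=n$ is ruled out because every column of an $n\times n$ Latin square contains all symbols, making its incidence matrix the singular all-ones matrix $J$. The unifying idea is to realize the rectangle by deleting $n-k$ rows from an $n\times n$ Latin square and to observe that the incidence matrix depends only on the deleted rows: if rows with symbol-permutations $\pi_1,\dots,\pi_{n-k}$ are removed, then column $j$ loses exactly the symbols those rows placed in it, so over $GF(2)$ the incidence matrix is $M=J+P_{\pi_1}+\dots+P_{\pi_{n-k}}$, where each $P_{\pi}$ is a permutation matrix.

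For $n$ even I would delete a single row, giving $M=J+P_{\pi}$ over $GF(2)$. Using $P_\pi\mathbf{1}=\mathbf{1}$ one obtains the factorization $J+P_\pi=P_\pi(I+J)$, hence $\det M=\det(P_\pi)\det(I+J)=\det(I+J)$ over $GF(2)$. Since the eigenvalues of $J$ are $n$ and $0$ (with multiplicity $n-1$), $\det(I+J)=n+1\equiv 1 \pmod 2$ because $n$ is even, so $M$ is nonsingular for \emph{every} Latin square and every deleted row; this settles the case $k=n-1$.

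For $n$ odd I would delete two rows chosen so that their relative permutation is an $n$-cycle. Taking one deleted row to be the identity and the other the $n$-cycle $\rho$ (legitimate because an $n$-cycle is fixed-point-free, so the two rows are discordant and, by Hall's marriage theorem~\cite{H123}, extend to a full Latin square), the incidence matrix becomes $M=I+J+C$ over $GF(2)$, where $C=P_\rho$ is the cyclic shift. I would then prove $M$ nonsingular by a kernel argument rather than a determinant expansion: over $GF(2)$ one has $\ker(I+C)=\{\mathbf{0},\mathbf{1}\}$ and $\mathrm{im}(I+C)=\{w:\mathbf{1}^{\top}w=0\}$ (the even-weight vectors), since $\mathbf{1}^{\top}(I+C)=0$ and $\mathrm{rank}(I+C)=n-1$. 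If $Mv=0$ then $(I+C)v=(\mathbf{1}^{\top}v)\,\mathbf{1}$; the case $\mathbf{1}^{\top}v=0$ forces $v\in\{\mathbf{0},\mathbf{1}\}$, and $v=\mathbf{1}$ is excluded by $\mathbf{1}^{\top}\mathbf{1}=n\equiv 1$, while the case $\mathbf{1}^{\top}v=1$ is impossible because $\mathbf{1}$ has odd weight and hence lies outside $\mathrm{im}(I+C)$. Thus $v=\mathbf{0}$ and $M$ is invertible, settling $k=n-2$.

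I expect the odd case to be the main obstacle: unlike the even case, the deleted pair of rows cannot be arbitrary. For instance, for $n=5$ deleting two rows whose relative permutation is $(1\,2)(3\,4\,5)$ produces two identical columns in $M$ and a singular matrix, so the argument genuinely needs the $n$-cycle choice. The remaining point to verify is that deleting the required rows leaves a valid $k\times n$ Latin rectangle whose $k$ ones per row and column (Proposition~\ref{balanced}) match the $p=\lceil n/\text{maxflow}(t)\rceil$ phase count of Corollary~\ref{corollaryPhases}; this is routine once the two structural cases above are established.
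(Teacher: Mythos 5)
Your proof is correct, and it does substantially more than the paper: the paper offers essentially no existence argument for this corollary. Its entire justification is the sentence that follows the statement, which invokes Proposition~\ref{nonsingular} to conclude that the parity constraint forces $k=n-1$ (for $n$ even) or $k=n-2$ (for $n$ odd). That is only the \emph{necessary} direction; since Proposition~\ref{nonsingular} is itself only a necessary condition, nonsingularity over $GF(2)$ does not follow from having an odd number of rows, and the existence of a suitable Latin rectangle is simply asserted. You supply exactly the missing half: writing the incidence matrix of a rectangle obtained by deleting rows from an $n\times n$ Latin square as $J$ plus the permutation matrices of the deleted rows, you show (i) for $n$ even, deleting \emph{any} single row from \emph{any} Latin square works, via $\det(J+P_\pi)=\det(I+J)\equiv n+1\equiv 1\pmod 2$; and (ii) for $n$ odd, deleting two rows whose relative permutation is an $n$-cycle works, via a clean kernel argument for $I+C+J$ (both case splits on $\mathbf{1}^\top v$ check out). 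Your $n=5$ counterexample, where relative permutation $(1\,2)(3\,4\,5)$ yields two equal columns and hence a singular matrix, is a genuinely useful observation: it shows the odd case requires a real choice, so the paper's implicit ``there exists'' cannot be upgraded to ``any $(n-2)$-row rectangle'', and an argument like yours is actually needed. The only blemish is your closing remark about matching the phase count of Corollary~\ref{corollaryPhases}: the hypothesis that $n$ is a multiple of $\text{maxflow}(t)$ plays no role in the existence argument (nor in the paper's treatment), so there is nothing there to verify.
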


Due to Proposition~\ref{nonsingular}, when $n$ is even the necessary requirement for a nonsingular incidence matrix is the Latin rectangle to have $n-1$ rows. When $n$ is odd the necessary requirement for a nonsingular incidence matrix is the Latin rectangle to have $n-2$ rows.

\begin{theorem}\label{TheoremListening1}
 When decoding is performed with the incidence matrix from Corollary~\ref{corollary1}, any eavesdropper needs to listen at least $\text{maxflow}(t)$ links in order to decode at least one source packet.
\end{theorem}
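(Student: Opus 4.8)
The plan is to reduce the statement to a two-part argument: a linear-algebraic uniqueness fact about the decoding matrix, followed by a combinatorial accounting that turns a lower bound on the number of required coded packets into a lower bound on the number of tapped links. First I would fix the notation of the role-swap introduced at the start of this section: encoding is $\mathbf{c}=M^{-1}\mathbf{x}$ and decoding is $\mathbf{x}=M\mathbf{c}$, where $M$ is the incidence matrix of the Latin rectangle supplied by Corollary~\ref{corollary1}. By Proposition~\ref{balanced} this $M$ is balanced with exactly $k$ ones per row, and by Proposition~\ref{nonsingular} together with Corollary~\ref{corollary1} we have $k=n-1$ when $n$ is even and $k=n-2$ when $n$ is odd; in either case $k\ge n-2$. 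Consequently each source packet is recovered as $x_i=\bigoplus_{j:\,m_{i,j}=1}c_j$, an XOR of exactly $k$ of the $n$ coded packets.

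The heart of the proof is the claim that the eavesdropper cannot reconstruct any $x_i$ from fewer than these $k$ coded packets, nor shortcut by some clever combination. Here I would argue as follows. Since the intermediate nodes only forward packets, every eavesdropped link carries unmodified coded packets $c_j$, each of which is a known $GF(2)$-linear functional of $\mathbf{x}$, namely the $j$-th row of $M^{-1}$. The eavesdropper can thus deduce $x_i$ if and only if the indicator $e_i$ lies in the row span of the collected rows of $M^{-1}$, i.e. iff there is a coefficient vector $v$, supported on the index set of the collected packets, with $v^{\top}M^{-1}=e_i^{\top}$. Because $M^{-1}$ is nonsingular over $GF(2)$, this equation has the \emph{unique} solution $v^{\top}=e_i^{\top}M$, which is precisely the $i$-th row of $M$ and hence has weight $k$. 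Therefore $x_i$ is recoverable if and only if the collected index set contains the full support of that row; missing even one of those $k$ designated packets leaves $e_i$ outside the available span and keeps $x_i$ undetermined. In particular the adversary needs at least $k$ distinct coded packets to decode any single source packet.

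Finally I would translate this bound into links. By Theorem~\ref{XORed-unicast} the transmission uses $n=p\times\text{maxflow}(t)$ packets delivered over $p$ phases along $\text{maxflow}(t)$ disjoint paths, so a single link carries at most one coded packet per phase, i.e. at most $p$ distinct coded packets over the whole session. If the adversary taps $\ell$ links it can collect at most $\ell p$ distinct packets, and decoding one source packet forces $\ell p\ge k\ge n-2=p\times\text{maxflow}(t)-2$; dividing and taking the ceiling yields $\ell\ge\lceil k/p\rceil=\text{maxflow}(t)$, which is the assertion. The step I expect to be the main obstacle is exactly this last accounting: one must justify rigorously, for a general topology rather than the worked example, that no link can carry more than $p$ of the required packets and that fewer than $\text{maxflow}(t)$ links never suffice. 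This is also where boundary cases intervene — the clean ceiling needs $p\ge2$ (and $p\ge3$ when $n$ is odd, which is automatic since $n=p\times\text{maxflow}(t)$ odd forces both factors odd), a single-phase transmission being degenerate — so I would either restrict to the multi-phase regime the paper adopts throughout or invoke the Max-Flow Min-Cut structure to dispose of the degenerate case.
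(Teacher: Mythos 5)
Your proposal is correct and its skeleton is the same as the paper's: decoding any single source packet requires the $k$ coded packets indexed by the support of the corresponding row of $M$ (with $k=n-1$ or $n-2$ by Corollary~\ref{corollary1}), while tapping only $\text{maxflow}(t)-1$ links yields too few packets. The paper's own proof is just a two-sentence version of this counting argument, and your write-up closes two real gaps in it. First, the paper merely asserts that an adversary who misses the designated subset of $n-1$ or $n-2$ packets cannot decode; your nonsingularity argument (the equation $v^{\top}M^{-1}=e_i^{\top}$ has the unique solution $v^{\top}=e_i^{\top}M$, so $x_i$ is recoverable if and only if the collected index set contains the full support of row $i$ of $M$) is what actually rules out a linear shortcut combining fewer packets, and the paper nowhere proves it. Second, your boundary analysis is a genuine catch rather than a pedantic one: the inequality $\ell p\ge k$ gives $\ell\ge\text{maxflow}(t)$ only when $p\ge 2$, and in the single-phase case $p=1$ the theorem as stated is actually \emph{false}, not merely unproved. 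For instance, with $n=\text{maxflow}(t)=4$ and the paper's own $4\times 4$ matrix $M$, an adversary tapping $3=\text{maxflow}(t)-1$ links obtains $c_1,c_2,c_3$ and recovers $x_1=c_1\oplus c_2\oplus c_3$, since row $1$ of $M$ is supported on $\{1,2,3\}$. Consequently, of the two escape routes you offer, only the first (restricting to the multi-phase regime $p\ge 2$, with $p\ge 3$ automatic for odd $n$) can work; no max-flow min-cut argument can dispose of the degenerate case, because the statement fails there. The paper's proof silently carries the same restriction without acknowledging it.
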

\vspace{-0.5cm}
\begin{proof}
Assume that an adversary eavesdrops $\text{maxflow}(t) - 1$ links. Since the incidence matrix used for decoding is related to a Latin rectangle with $n-1$ or $n-2$ rows, eavesdropping ``just'' $\text{maxflow}(t) - 1$ links is not sufficient for the adversary to receive at least one subset of $n-1$ or $n-2$ packets from which he/she can decode at least one original packet.
\end{proof}

Another remark that can be given about our approach is that the number of XOR operations between different packets (both in the source and in the sink nodes) is relatively high. We can address that remark by using Latin rectangles with smaller number of rows as a trade-off between the number of encoding/decoding operations and the ability of an adversary to decode a source packet. Namely, the encoding and decoding efforts at the source and sink node are the highest when encoding and decoding requires $n-1$ or $n-2$ packets. In order to decrease the number of operations at the nodes, the Latin rectangle should have $k\leq n-2$ rows. However, we are interested to reduce the number $k$ without reducing the number of links that have to be listened by an eavesdropper in order to decode at least one original packet. The following theorem gives the necessary and sufficient condition for that to happen:

\begin{theorem}\label{TheoremOptimizedSecurity}
Let the coding be done by $M^{-1}$ obtained from a Latin rectangle $L_{k\times n}$ of size $k \times n$, where $k\leq n-2$. Further, assume that the transfer is done by sending $n$ packets from $s$ to $t$ in $p=\lceil {\frac{n}{\text{maxflow}(t)}}\rceil$ phases on $\text{maxflow}(t)$ disjoint paths and let the sets of indexes of the packets sent via $i$-th disjoint path are denoted by $S_i, i=1,\ldots,\text{maxflow}(t)$. A necessary and sufficient condition for an eavesdropper to need to listen at least $\text{maxflow}(t)$ links in order to decode at least one original packet is:
\begin{equation}\label{ConditionForOptimizedSecurity}
\small
\forall j \in \{1,\ldots,n\}, \forall i \in \{1,\ldots,\text{maxflow}(t)\}: L_{k,j} \cap S_i \neq \emptyset,
\end{equation}
where $L_{k,j}, j \in \{1,\ldots,n\}$ is the set of elements in the $j$-th column of the Latin rectangle $L_{k\times n}$.
\end{theorem}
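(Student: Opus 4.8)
The plan is to replace the informal requirement ``needs to listen to at least $\text{maxflow}(t)$ links'' by an exact combinatorial criterion for when a \emph{single} source packet can be reconstructed, and then to match that criterion against condition~(\ref{ConditionForOptimizedSecurity}). The tool I would isolate first is a decoding lemma: an eavesdropper who has obtained the coded packets indexed by a set $I\subseteq\{1,\ldots,n\}$ can reconstruct the source packet $x_j$ if and only if $L_{k,j}\subseteq I$.

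First I would prove this lemma. The ``if'' direction is immediate from the decoding rule: since here coding is performed with $M^{-1}$ and decoding with $M$, we have $x_j=\bigoplus_{i\in L_{k,j}}c_i$, the $j$-th row of $M$ being the indicator vector of the column $L_{k,j}$ of the Latin rectangle. The ``only if'' direction carries the real content, because it must exclude any algebraic shortcut recovering $x_j$ from coded packets that do not cover all of $L_{k,j}$. Since $M^{-1}$ is nonsingular over $GF(2)$, its rows form a basis of $GF(2)^n$, so the coordinate functional $e_j^\top$ that selects $x_j$ has a \emph{unique} expansion as a $GF(2)$-combination of these rows; evaluating $\sum_i m_{j,i}\,\mathrm{row}_i(M^{-1})=(M\,M^{-1})_{j,\cdot}=e_j^\top$ shows that this expansion is supported exactly on $L_{k,j}$. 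Over $GF(2)$ a vector lies in the span of a subset of basis vectors precisely when that subset contains the support of its unique expansion, so $x_j$ is determined by $\{c_i:i\in I\}$ iff $L_{k,j}\subseteq I$. I expect this to be the main obstacle, as it is exactly the assertion that the balanced structure leaks no information about any $x_j$ short of the full index set $L_{k,j}$.

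With the lemma in hand the remainder is bookkeeping over the disjoint paths. Because the $\text{maxflow}(t)$ paths are disjoint, the index sets $S_1,\ldots,S_{\text{maxflow}(t)}$ partition $\{1,\ldots,n\}$, and an eavesdropper tapping a set $P$ of paths obtains exactly $I=\bigcup_{i\in P}S_i$. For sufficiency I would assume~(\ref{ConditionForOptimizedSecurity}) and let $P$ omit some path $i_0$; then every $L_{k,j}$ meets $S_{i_0}$, so $L_{k,j}\not\subseteq I$, and by the lemma no source packet is recoverable, whence all $\text{maxflow}(t)$ links are required. For necessity I would argue contrapositively: if~(\ref{ConditionForOptimizedSecurity}) fails there are $j_0,i_0$ with $L_{k,j_0}\cap S_{i_0}=\emptyset$, so $L_{k,j_0}\subseteq\bigcup_{i\neq i_0}S_i$, and listening to the $\text{maxflow}(t)-1$ paths other than $i_0$ already reconstructs $x_{j_0}$ by the lemma, contradicting the requirement. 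Combining the two directions yields the stated equivalence.
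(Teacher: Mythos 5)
Your proposal is correct, and its two ``bookkeeping'' halves (necessity by exhibiting the unlistened path $S_{i_0}$ with $L_{k,j_0}\cap S_{i_0}=\emptyset$, sufficiency by noting that every $L_{k,j}$ must meet the missing path's index set) mirror the paper's own proof almost exactly. The genuine difference is your decoding lemma: the claim that an eavesdropper holding $\{c_i : i\in I\}$ can recover $x_j$ \emph{if and only if} $L_{k,j}\subseteq I$, with the ``only if'' direction proved via the unique expansion of the coordinate functional $e_j^\top$ in the basis formed by the rows of $M^{-1}$ (the expansion $e_j^\top=\sum_i m_{j,i}\,\mathrm{row}_i(M^{-1})$ being supported exactly on $L_{k,j}$). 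The paper never states or proves this; both directions of its proof silently assume that decoding $x_j$ requires possession of all packets indexed by $L_{k,j}$, i.e., that no algebraic shortcut can extract $x_j$ from a proper subset. You correctly identified this as the real content of the theorem, and your linear-algebra argument closes that gap, giving an information-theoretic guarantee rather than merely ruling out the ``obvious'' decoding rule. Your cleaner route also avoids a flaw in the paper's sufficiency argument: the paper asserts $S_{i,j_1}\cap S_{i,j_2}=\emptyset$ for all $j_1,j_2$, which is false in general, since distinct columns of a Latin rectangle can (and do) share symbols, so the sets $L_{k,j_1}\cap S_i$ and $L_{k,j_2}\cap S_i$ need not be disjoint. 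In short: same skeleton, but your version supplies the missing central lemma and is the more rigorous of the two; the paper's version buys only brevity. One cosmetic caveat: your statement that the $j$-th row of $M$ indicates $L_{k,j}$ follows the convention of the paper's worked examples (equations (\ref{ExampeEncoding12}) and (\ref{ExampeDecoding12})) rather than its formal Definition \ref{incidencematrix}, under which that indicator is the $j$-th column; the argument is unaffected up to transposition, but it is worth fixing the convention explicitly.
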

\begin{proof}
To show that the condition (\ref{ConditionForOptimizedSecurity}) is necessary assume that an eavesdropper needs $\text{maxflow}(t) - 1$ links in order to decode one original packet $x_l$, and let us denote by $S_m$ the set of indexes of the packets sent via the disjoint path that was not listened by the eavesdropper. This means that for the $l$-th column $L_{k,l}$ of the Latin rectangle $L_{k\times n}$: $L_{k,l} \cap S_m = \emptyset$ which violates the condition (\ref{ConditionForOptimizedSecurity}).

To show that the condition (\ref{ConditionForOptimizedSecurity}) is sufficient, let us denote by $S_{i,j}=L_{k,j} \cap S_i$, $j \in \{1,\ldots,n\}, i \in \{1,\ldots,\text{maxflow}(t)\}$. It is sufficient to notice that $S_{i,j}$ are disjunctive partitions for every set $L_{k,j}$, i.e., $$\forall j \in \{1,\ldots,n\}: \bigcup_{i=1}^{\text{maxflow}(t)} S_{i,j}=L_{k,j}$$ and $$\forall j_1, j_2 \in \{1,\ldots,n\}: S_{i,j_1} \cap S_{i,j_2} = \emptyset.$$ Since $|L_{k,j}|=k$, and the encoding of original $n$ packets is done by $M^{-1}$, it follows that an eavesdropper can decode any original packet only by listening at least $\text{maxflow}(t)$ links.
\end{proof}

\newpage
\begin{example}
\label{Example02}
We present an example that illustrates the security in our approach. The goal is to achieve secrecy\footnote{We use here the term \emph{secrecy} as it is used in \cite[Ch.7 pp. 185]{Medardbook} }
so that a passive adversary is able to reconstruct $n$ source packets only when at least $\text{maxflow}(t)$ links are eavesdropped. By sending XOR-ed packets on disjoint paths (exploiting the path diversity), an adversary is unable to decode the message although several paths are eavesdropped. Let us consider the network shown in Fig.\ref{treta}, where a source $s$ communicates with two sinks $t_1$ and $t_2$ with the help of intermediate nodes $u_i$, $i=1, 2, 3$, and sends twelve packets to $t_1$ and $t_2$. Packets are sent in four phases since maxflow($t$) = 3. Let us use the following $5\times 12$ Latin rectangle:
$$\small
L_{5\times 12}=
\left[
\begin{array}{@{\ }c@{\ }c@{\ }c@{\ }c@{\ }c@{\ }c@{\ }c@{\ }c@{\ }c@{\ }c@{\ }c@{\ }c@{\ }c@{\ }}
\color{green}{4} & \color{red}{2} & \color{green}{11} & \color{green}{8} & \color{blue}{12} & \color{blue}{1} & \color{green}{9} & \color{blue}{5} & \color{red}{10} & \color{red}{7} & \color{blue}{6} & \color{red}{3} \\
\color{red}{2}   & \color{green}{8} & \color{blue}{12} & \color{red}{3} & \color{blue}{6} & \color{red}{10} & \color{green}{4} & \color{green}{11} & \color{blue}{5} & \color{blue}{1} & \color{green}{9} & \color{red}{7} \\
\color{red}{3}   & \color{green}{4} & \color{red}{2} & \color{green}{9} & \color{green}{11} & \color{blue}{12} & \color{blue}{5} & \color{blue}{6} & \color{red}{7} & \color{green}{8} & \color{red}{10} & \color{blue}{1} \\
\color{green}{9} & \color{red}{10} & \color{blue}{1} & \color{blue}{6} & \color{red}{3} & \color{red}{7} & \color{red}{2} & \color{green}{8} & \color{green}{4} & \color{green}{11} & \color{blue}{12} & \color{blue}{5} \\
\color{blue}{6}  & \color{blue}{5} & \color{red}{7} & \color{red}{10} & \color{blue}{1} & \color{green}{11} & \color{green}{8} & \color{red}{3} & \color{blue}{12} & \color{green}{4} & \color{red}{2} & \color{green}{9} \\
\end{array} \right].
$$


The colors of indexes in $L_{5\times 12}$ correspond to the colors of the packets as they are sent in Fig \ref{treta}. If the sink nodes reconstruct the source packets with $M^{-1}$, then not all packets have the same level of decoding complexity. That is demonstrated with relations (\ref{ExampeEncoding12}) and (\ref{ExampeDecoding12}). For instance, to decode $x_4, x_7$ and $x_8$ nine coded packets are needed, while to decode $x_5$ and $x_{11}$ just three packets are needed. The goal is to avoid this non-balanced complexity in the decoding. Therefore, as in Theorem \ref{TheoremOptimizedSecurity} the encoding is done by $M^{-1}$ and the decoding by $M$. When $s$ computes the vector of coded packets as $\mathbf{c} = M^{-1} \mathbf{x},$ then the coded packets $c_i$, $i=1,\ldots,12$ are XOR-ed combinations of different number of source packets. Consequently, decoding of packets is done with a balanced matrix, i.e., $\mathbf{x} = M \mathbf{c}.$
\begin{figure}
\centering
\includegraphics[width=2.5in]{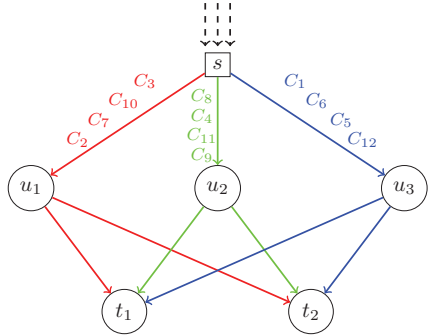}

\caption{Routing of 12 packets for secure coding when decoding is performed with 5 coded packets}
\label{treta}
\end{figure}

\begin{minipage}{0.35\textwidth}
\begin{IEEEeqnarray}{l}\label{ExampeEncoding12}\scriptsize
\begin{aligned}
c_1    = & x_2\oplus x_3\oplus x_4\oplus x_6 \oplus x_9,\\
c_2    = & x_2\oplus x_4\oplus x_5\oplus x_8 \oplus x_{10},\\
c_3    = & x_1\oplus x_2\oplus x_7\oplus x_{11} \oplus x_{12},\\
c_4    = & x_3\oplus x_6\oplus x_8\oplus x_9 \oplus x_{10},\\
c_5    = & x_1\oplus x_3\oplus x_6\oplus x_{11} \oplus x_{12},\\
c_6    = & x_1\oplus x_7\oplus x_{10}\oplus x_{11} \oplus x_{12},\\
c_7    = & x_2\oplus x_4\oplus x_5\oplus x_8 \oplus x_9,\\
c_8    = & x_3\oplus x_5\oplus x_6\oplus x_8 \oplus x_{11},\\
c_9    = & x_4\oplus x_5\oplus x_7\oplus x_{10} \oplus x_{12},\\
c_{10} = & x_1\oplus x_4\oplus x_7\oplus x_8 \oplus x_{11},\\
c_{11} = & x_2\oplus x_6\oplus x_9\oplus x_{10} \oplus x_{12},\\
c_{12} = & x_1\oplus x_3\oplus x_5\oplus x_7 \oplus x_9.\\
\end{aligned}\hspace{0.75cm}
\end{IEEEeqnarray}
\vspace{0.2cm}
\end{minipage}
\begin{minipage}{0.35\textwidth}
\begin{IEEEeqnarray}{l}\label{ExampeDecoding12}\scriptsize
\hspace{0.5cm}\begin{aligned}
x_1    = & c_2\oplus c_5\oplus c_{10}\oplus c_{11} \oplus c_{12},\\
x_2    = & c_1\oplus c_3\oplus c_5\oplus c_6 \oplus c_7\oplus c_9 \oplus c_{10},\\
x_3    = & c_3\oplus c_5\oplus c_6\oplus c_7 \oplus c_8\oplus c_9 \oplus c_{12},\\
x_4    = & c_4\oplus c_5\oplus c_6 \oplus c_7\oplus c_8\oplus c_9 \oplus c_{10}\oplus c_{11} \oplus c_{12},\\
x_5    = & c_1\oplus c_2\oplus c_4,\\
x_6    = & c_1\oplus c_2\oplus c_4\oplus c_6 \oplus c_7\oplus c_{10} \oplus c_{11},\\
x_7    = & c_2\oplus c_3\oplus c_4\oplus c_5 \oplus c_6\oplus c_7 \oplus c_8 \oplus c_{11}\oplus c_{12},\\
x_8    = & c_1\oplus c_3\oplus c_5\oplus c_7 \oplus c_8\oplus c_9 \oplus c_{10} \oplus c_{11} \oplus c_{12},\\
x_9    = & c_1\oplus c_2\oplus c_5\oplus c_9 \oplus c_{10},\\
x_{10} = & c_1\oplus c_5\oplus c_7\oplus c_9 \oplus c_{10},\\
x_{11} = & c_1\oplus c_7\oplus c_8,\\
x_{12} = & c_3\oplus c_4\oplus c_5\oplus c_7 \oplus c_9.\\
\end{aligned}\hspace{0.5cm}
\end{IEEEeqnarray}
\vspace{0.2cm}
\end{minipage}



Assume that the routing is as follows: on the first path the source sends ($\color{red}C_3, C_{10}, C_7, C_2$), on the second path ($\color{green}C_8, C_{4}, C_{11}, C_{9}$) and ($\color{blue}C_1, C_{6}, C_{5}, C_{12}$) on the third path as it is shown in Fig.\ref{treta}. We use three different colors for the packets sent to three disjoint paths in order to demonstrate the essence of the proof of Theorem~\ref{TheoremOptimizedSecurity}. Note that all colors are present in every column of the Latin rectangle $L_{5\times 12}$. This corresponds to the condition (\ref{ConditionForOptimizedSecurity}) in Theorem~\ref{TheoremOptimizedSecurity}.  In order to reconstruct at least one source packet, an adversary must eavesdrop at least 3 links.
\end{example}

\section{Conclusions}\label{conclusion}
In this paper we have presented a construction of codes over $GF(2)$ which reach the max-flow for single source multicast acyclic networks with delay. The coding is exclusively performed in $GF(2)$, i.e., it is a bitwise XOR of packets with equal lengths. The encoding and decoding are based on balanced nonsingular matrices that are obtained as incidence matrices from Latin rectangles. Balanced XOR-ed coding is of particular importance for energy and processor constraint devices. Additionally, we showed that the approach offers plausible security properties, i.e., if an eavesdropper wants to reconstruct at least one original packet, then the number of eavesdropped links must be equal to the max-flow of the network.

Possible future work includes intermediate nodes to form
coded packets, as well as building networks dynamically by
adding more and more sink nodes that reach the max-flow when the
coding is XOR-ed coding.

\section*{Acknowledgements}\label{acknowledgemets}
\vspace{-0.25cm}
We would like to thank Gergely Bicz\'{o}k for his discussions and remarks that significantly improved the paper.

\bibliographystyle{plain}
\bibliography{refer}

\end{document}